\newtheorem{theorem}{Theorem}
\newtheorem{proposition}{Proposition}
\newtheorem{definition}[theorem]{Definition}
\newtheorem{lemma}[theorem]{Lemma}
\newenvironment{proof}{\trivlist\item[]\textbf{Proof.\ }}
                      {\endtrivlist}
\begin{document}

\title{Discretization of Continuous Time Discrete Scale Invariant Processes: Estimation and Spectra}

\author{Saeid Rezakhah\footnote{ Corresponding author. Faculty of Mathematics and Computer Science, Amirkabir University of Technology, Tehran, Iran}, Yasaman Maleki \footnote{Faculty of Mathematics, Alzahra University, Tehran, Iran} }

\maketitle

\begin{abstract}
Imposing some flexible sampling scheme  we provide  some discretization of  continuous time discrete scale invariant (DSI) processes which is a subsidiary discrete time DSI process.
  Then by introducing some simple random measure we provide a second continuous time DSI process which provides a proper approximation  of the first one.  This enables us to
provide a bilateral relation between   covariance functions of the subsidiary process and the new continuous time processes.
The time varying spectral representation of such continuous time DSI  process is characterized, and its spectrum is estimated. Also, a new method for estimation time dependent Hurst parameter of such processes is provided which gives a more accurate estimation.  The performance of this estimation method is studied via simulation. Finally this method is applied to the real data of  S$\&$P500 and Dow Jones indices for some special periods.\\

\noindent {\bf{Keywords}:}Discretization of continuous time DSI processes, Time dependent Hurst parameter estimation, Spectral representation.\\

\noindent {\bf{2010 MSC}:} 60G18, 60G99
\end{abstract}

\section{Introduction}
\label{introduction}
Scale invariance (or self-similarity), as an important feature, often are used as a fundamental property to interpret many natural and man-made phenomena like turbulence of fluids, textures in geophysics, telecommunications of network traffic, image processing, finance, $\cdots$ \cite{fla1}. Scale invariance is often described as a symmetry of the system relatively to a transformation of a scale, that is mainly a dilation or a contraction (up to some re-normalization) of the system parameters \cite{fla1}. Discrete scale invariance (DSI) is a property which requires invariance by dilation for certain preferred scaling factors \cite{sornett}. Burnecki et al. \cite{bur} and Borgnat et. al. \cite{fla2} have studied the property of DSI and its relation to PC by means of the Lamperti transformation \cite{characterization1}.

Even though that the re-normalization factor for scale invariant processes has some fixed exponent called Hurst parameter, but for many real world data we are dealt with  situations in which th Hurst parameter $H$, changes in time.  As part of the  literatures
 in studying   scale invariant processes with time dependent Hurst parameter $H(t)$,
 we refer the reader to the works of Flandrin and Goncalves \cite{fla3}, \cite{fla4}, Cavanaugh et al \cite{cav}, Courjouly \cite{cor}, Goncalves and Abry \cite{gon}, Kent and Wood \cite{kent}, Stoev et al \cite{sto} and Wang et al \cite{wang} s.
We present  a new method for estimating  time dependent  Hurst parameter and  implement it for simulated data and also S$\&$P500 and Dow Jones indices for some  periods.

In this paper we are to extend the work of Modarresi and Rezakhah \cite{pcls}  that they provided covariance structure and time varying
  spectral representation of certain periodically correlated processes. Even though one could provide the results expressed here in this paper by applying
  some quasi Lamperti transform, but we are to study this by an alternative way.
Modarresi and Rezakhah \cite{new1} considered a  flexible  sampling scheme  for  continuous time DSI process   and provide a discretization of this process as a discrete time DSI process.
We follow their method to introduce simple random measures on  subintervals between  sampling points of such discrete time   process,  which  amounts to the size of the process at the end point of each subinterval.
     So following their method  we provide a second continuous time DSI process which is a proper approximation  of the first one, which enables us
 to obtain  the spectral representation and spectrum  of this new continuous time DSI process.
  The covariance structure and the spectral representation of such a continuous time DSI process are specified.

Let $\{X(t), t\geq 1\}$  to be a continuous time DSI process with scale $\lambda>1$, which can be considered as the
 accumulated of flow in time. We assume that X(t) in turn can be approximated via a DSI
sequence $X^d_j$, which represents the increment of the flow on some proper partition of
time.
Following the flexible sampling scheme, we consider a partition of the positive real line as $B_j=(\lambda^{j-1}, \lambda^{j}]$ for $j \in \mathbbm{N}$ and $\lambda>1$, so  $|B_j|=\lambda^{-k}|B_{j+k}|, \; k \in \mathbbm{N}$. Then  we consider sampling of  $X(t)$  at $q$ arbitrary points $s_0 < s_1< \cdots < s_{q-1} $ inside the first scale interval $(1, \lambda]$, and follow  sampling at corresponding points $\{\lambda^{j-1} s_i:\, j \in \mathbbm{N}, i=0,1, \cdots, q-1\}$ in the successive  scale intervals \cite{characterization1}.  This provides a partition for each scale interval and divide  the scale interval $B_j$ to subintervals
 $B_{j,i} = (\lambda^{j-1}s_{i-1}, \lambda^{j-1}s_{i}], i =0,\cdots, q-1$.    Here we  introduce some certain continuous time DSI processes and provide corresponding spectral representation, which can be applied  to  approximate many  continuous time DSI processes.
In this order we extend the method of Modarresi and Rezakhah \cite{pcls}, which presented to provide spectral representation of certain continuous time  periodically correlated processes.

\subsection{Continuous time DSI process.}
Let
  $X^d_{(j-1)q+i}:=X(\lambda^{j-1}s_i )$, for $j \in \mathbb{N}$ and $i=0,\cdots ,q-1$, where $X(t)$ is some continuous time $H-\lambda$ DSI process.
Then  $X^d_j:=X(\lambda^{[j/q]}s_{j-q[j/q]})$, for $j\in \mathbb{N}$  is a subsidiary discrete time (H-$\lambda$)-DSI process.
  Now let
$M_{j}(\cdot)$  be a sequence of random measures defined on $B_j=\cup_{i=0}^{q-1}B_{j,i}$,  $ j \in \mathbbm{N},\; i=0,\cdots q-1$ where
  $  M_{j}(B_{j,i}) := X_{(j-1)q+i}^d$,  and
  $ M_{j,i}(\lambda^{j-1}s_{i-1},t]:=X_{(j-1)q+i}(t)$  for $t\in B_{j,i}$.
The covariance structure of such random measures is assigned through the covariance structure
of the discrete time DSI process $X_{(j-1)q+i}^d$.  In addition, as the amount of the process
on a set A, say the ruin due to an earthquake or a tsunami, we define simple
random measure
 $M(A)= \sum_{j=1}^m \sum_{i=1}^q M_{i}(A \cap B_{j,i})$,  where $A=\cup _{j=1}^m \cup _{i=1}^q (A \cap B_{j,i})$.

Then we study $X^d(t) = \sum_{j=1}^\infty \sum_{i=1}^q \Delta_{j,i}(t) I_{B_{j,i}}(t)$,  where $\Delta_{j,i}(t)$ is a linear combination of $X^d_{(j-1)q+i-1}$ and $X^d_{(j-1)q+i}(t)$, for $t \in B_{j,i}$  which describes elimination of the effect of
the process from the previous subinterval,say $X^d_{(j-1)q+i-1}$,   and aggregate the amount of the process on the current subinterval
 , say $X^d_{(j-1)q+i}$, as $t$ approaches to the right end point of $B_{j,i}$; say surges in a tsunami.
We also study the spectral representation of this process.

The paper is organized as follows. In Section 2, we give some background on DSI and PC processes and the Lamperti transformation. In Section 3, the main results as the covariance structure and the spectral representation of such continuous certain DSI processes are presented. In Section 4, a new method for the estimation of $H(t)$ in the DSI processes is presented and the performance and accuracy of the method is studied via simulation. The proposed estimation method is also compared with the method introduced by Modarresi and Rezakhah \cite{characterization1}. Finally, the estimation method is applied on S$\&$P500 and Dow Jones indices for some special periods.

\section{Preliminaries}
In this section, we review the spectral representation of PC processes based on unitary operators, and also the quasi Lamperti transform which connects PC and DSI processes.

\subsection{DSI and PC processes and Lamperti transformation}
A unitary operator on a Hilbert space $\mathcal{H}$, is a linear operator from $\mathcal{H}$ to $\mathcal{H}$ which preserve inner product as $<Ux, Uy>=<x, y>=
Cov(x, y) $ for every $x, y \in \mathcal{H}$, \cite{hurd}.

\begin{theorem}\label{theor}
For any unitary operator $U$ on a Hilbert space $\mathcal{H}$, there exist a unique spectral measure $Q$ on the Borel subsets of $[0, 2\pi)$ such that
$U=  \int_0^{2\pi} e^{i\omega} Q(d\omega)$, and $U^t=\int_0^{2\pi} e^{i\omega t} Q(d\omega)$ for any integer $t$,  \cite{hurd}.
\end{theorem}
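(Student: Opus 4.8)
The statement is the spectral theorem for a single unitary operator (the discrete-time analogue of Stone's theorem), and my plan is to construct $Q$ from a bounded Borel functional calculus for $U$ obtained through Herglotz's theorem; this keeps the argument self-contained and in the second-order spirit of the paper, rather than quoting the Gelfand--Naimark representation of the commutative $C^{*}$-algebra generated by $U$. First, fix $x\in\mathcal H$ and put $c_{n}=\langle U^{n}x,x\rangle$ for $n\in\mathbb Z$, which is meaningful because $U$ unitary gives $U^{-1}=U^{*}$ and hence $\langle U^{a}x,U^{b}x\rangle=\langle U^{a-b}x,x\rangle$. For scalars $a_{1},\dots,a_{N}$ and integers $m_{1},\dots,m_{N}$ one gets $\sum_{j,k}a_{j}\overline{a_{k}}\,c_{m_{j}-m_{k}}=\|\sum_{j}a_{j}U^{m_{j}}x\|^{2}\ge 0$, so $(c_{n})_{n\in\mathbb Z}$ is a positive-definite sequence, and Herglotz's theorem yields a unique finite positive Borel measure $\mu_{x}$ on $[0,2\pi)$ with $c_{n}=\int_{0}^{2\pi}e^{in\omega}\,\mu_{x}(d\omega)$ and $\mu_{x}([0,2\pi))=\|x\|^{2}$.

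Next I would polarize: for $x,y\in\mathcal H$ set $\mu_{x,y}=\tfrac{1}{4}\sum_{k=0}^{3}i^{k}\mu_{x+i^{k}y}$, a complex Borel measure satisfying $\langle U^{n}x,y\rangle=\int_{0}^{2\pi}e^{in\omega}\,\mu_{x,y}(d\omega)$ for all $n\in\mathbb Z$, sesquilinear in $(x,y)$, with $\mu_{y,x}=\overline{\mu_{x,y}}$ and total variation bounded by a constant times $\|x\|\,\|y\|$. For a bounded Borel function $f$ on $[0,2\pi)$ the form $(x,y)\mapsto\int_{0}^{2\pi}f\,d\mu_{x,y}$ is then a bounded sesquilinear form, so by the Riesz representation of such forms there is a unique bounded operator $\Phi(f)$ with $\langle\Phi(f)x,y\rangle=\int_{0}^{2\pi}f\,d\mu_{x,y}$. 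By construction $\Phi$ is linear, $\Phi(1)=I$, $\|\Phi(f)\|\le\sup|f|$, $\Phi(\overline f)=\Phi(f)^{*}$, and $\Phi(e_{n})=U^{n}$, where $e_{n}(\omega)=e^{in\omega}$.

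The decisive step is multiplicativity, $\Phi(fg)=\Phi(f)\Phi(g)$. Its engine is the identity $\mu_{\Phi(g)x,\,y}=g\,\mu_{x,y}$, valid for all $x,y$ and all bounded Borel $g$: it holds for $g=e_{m}$ because $\Phi(e_{m})=U^{m}$ and $\langle U^{n+m}x,y\rangle=\int e^{i(n+m)\omega}\mu_{x,y}(d\omega)$, hence for trigonometric polynomials by linearity, for continuous $g$ by uniform (Fej{\'e}r) approximation, and for bounded Borel $g$ by a monotone-class / dominated-convergence argument. Granting it, $\langle\Phi(fg)x,y\rangle=\int f\,d(g\,\mu_{x,y})=\int f\,d\mu_{\Phi(g)x,y}=\langle\Phi(f)\Phi(g)x,y\rangle$. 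Now define $Q(B)=\Phi(\mathbf 1_{B})$ for Borel $B\subseteq[0,2\pi)$. Since $\mathbf 1_{B}$ is real and bounded, $Q(B)^{*}=Q(B)$ and $\|Q(B)\|\le 1$; multiplicativity gives $Q(B)Q(C)=Q(B\cap C)$, so in particular $Q(B)^{2}=Q(B)$ and each $Q(B)$ is an orthogonal projection; moreover $Q(\emptyset)=0$, $Q([0,2\pi))=\Phi(1)=I$, and $B\mapsto\langle Q(B)x,y\rangle=\mu_{x,y}(B)$ is countably additive, so $Q$ is countably additive in the strong operator topology. Hence $Q$ is a spectral measure, and $U=\Phi(e_{1})=\int_{0}^{2\pi}e^{i\omega}Q(d\omega)$, while $U^{t}=\Phi(e_{t})=\int_{0}^{2\pi}e^{i\omega t}Q(d\omega)$ for every integer $t$.

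For uniqueness, if $Q'$ is another spectral measure with $U=\int_{0}^{2\pi}e^{i\omega}Q'(d\omega)$, then $U^{n}=\int_{0}^{2\pi}e^{in\omega}Q'(d\omega)$ for all $n$, so $\langle U^{n}x,x\rangle=\int_{0}^{2\pi}e^{in\omega}\,\langle Q'(d\omega)x,x\rangle$; the finite positive measure $B\mapsto\langle Q'(B)x,x\rangle$ thus has the same Herglotz representation as $\mu_{x}$ and must equal it, and polarization upgrades this to $\langle Q'(B)x,y\rangle=\mu_{x,y}(B)=\langle Q(B)x,y\rangle$ for all $x,y$, i.e.\ $Q'=Q$. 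I expect the multiplicativity step to be the main obstacle: it is precisely the assertion that $Q$ is projection-valued rather than merely bounded-operator-valued, and it requires the layered approximation (trigonometric polynomials, then continuous functions, then bounded Borel functions) rather than a one-line verification; everything else reduces to Herglotz's theorem and the Riesz representation of bounded sesquilinear forms, both of which may be cited.
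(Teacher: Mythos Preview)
Your argument is the standard Herglotz-based derivation of the spectral theorem for a single unitary operator, and it is correct as outlined; the one place that needs a little more care is the boundedness of the sesquilinear form $(x,y)\mapsto\int f\,d\mu_{x,y}$, which is cleanest to obtain by first bounding the quadratic form $x\mapsto\int f\,d\mu_{x}$ by $\|f\|_\infty\|x\|^{2}$ (since $\mu_{x}\ge 0$ with total mass $\|x\|^{2}$) and then polarizing for real $f$, treating complex $f$ by splitting into real and imaginary parts. With that in hand, the multiplicativity step and the identification $Q(B)=\Phi(\mathbf 1_{B})$ go through exactly as you describe.

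As for comparison with the paper: there is nothing to compare. The paper does not prove Theorem~\ref{theor}; it simply quotes the result from Hurd and Miamee \cite{hurd} as background in the preliminaries section. Your proposal therefore supplies a proof where the paper offers only a citation. That is perfectly legitimate, and your choice of the Herglotz route is well suited to the second-order, covariance-based framework the paper works in, but you should be aware that in the context of this paper the theorem is treated as a black box rather than something to be re-derived.
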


\begin{proposition}\label{prop1}
A second order stochastic sequence $Y_j$ is PC with period $T$ if and only if for every $j \in \mathbbm{Z}$, there exist a unitary operator  $U=V^T$ and a periodic sequence $P_j$ with period $T$, taking values in
$\mathcal{H}_Y = \overline{sp} \{Y_j, j \in \mathbbm{Z}\}$ for which $Y_j =V^j P_j$, where $V=\int_0^{2\pi} e^{i\omega/T} Q(d\omega)$ and $Q$ is the spectral measure defined in Theorem \ref{theor}, so $Y_j=\int_0^{2\pi} e^{i\omega j/T} Q(d\omega)P_j$, \cite{hurd}.
\end{proposition}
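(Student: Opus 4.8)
The plan is to prove the two implications separately, with the spectral theorem (Theorem~\ref{theor}) carrying the forward direction; the argument is essentially the classical Gladyshev--Hurd one, adapted to the Hilbert space $\mathcal{H}_Y$.

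For the easy implication, assume $Y_j = V^j P_j$ with $V$ unitary, $U = V^T$, and $P_{j+T} = P_j$. Since $V$, and hence every power of $V$, preserves inner products, I would simply compute
\[
\mathrm{Cov}(Y_{j+T}, Y_{k+T}) = \langle V^{j+T}P_{j+T}, V^{k+T}P_{k+T}\rangle = \langle V^{j+T}P_j, V^{k+T}P_k\rangle = \langle V^j P_j, V^k P_k\rangle = \mathrm{Cov}(Y_j, Y_k),
\]
where periodicity of $P$ is used in the second step and unitarity of $V^T$ applied to both arguments in the third. This is exactly periodic correlation with period $T$.

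For the forward implication, the first and principal step is to construct the ``shift by $T$'' as a unitary operator on $\mathcal{H}_Y$. Define $U$ on $sp\{Y_j: j \in \mathbb{Z}\}$ by $U Y_j = Y_{j+T}$, extended linearly. The periodic-correlation identity $\langle Y_{j+T}, Y_{k+T}\rangle = \langle Y_j, Y_k\rangle$ yields, for any finite combination, $\|\sum_j a_j Y_{j+T}\|^2 = \sum_{j,k} a_j \bar a_k \langle Y_j, Y_k\rangle = \|\sum_j a_j Y_j\|^2$, which shows simultaneously that $U$ is well defined (it respects all linear relations among the $Y_j$) and that it is isometric; it therefore extends by continuity to an isometry of $\mathcal{H}_Y$, whose range is closed and contains every $Y_j$, hence equals $\mathcal{H}_Y$, so $U$ is unitary. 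By Theorem~\ref{theor} there is a unique spectral measure $Q$ on $[0,2\pi)$ with $U = \int_0^{2\pi} e^{i\omega} Q(d\omega)$.

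The remaining steps are bookkeeping with the functional calculus. Put $V := \int_0^{2\pi} e^{i\omega/T} Q(d\omega)$; since $|e^{i\omega/T}| = 1$ this is unitary, and since $Q$ is a spectral measure $V^j = \int_0^{2\pi} e^{i\omega j/T} Q(d\omega)$ for every integer $j$, so in particular $V^T = U$. Then set $P_j := V^{-j} Y_j \in \mathcal{H}_Y$; using $V^{-T} = U^{-1}$ and $U Y_j = Y_{j+T}$,
\[
P_{j+T} = V^{-(j+T)} Y_{j+T} = V^{-j} U^{-1}(U Y_j) = V^{-j} Y_j = P_j,
\]
so $P_j$ is $T$-periodic, and $Y_j = V^j P_j = \int_0^{2\pi} e^{i\omega j/T} Q(d\omega) P_j$ by construction. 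I expect the only genuinely delicate point to be the well-definedness and surjectivity of $U$ on the closed span $\mathcal{H}_Y$ rather than merely on the indexed family, everything afterwards being a routine application of Theorem~\ref{theor} and the elementary properties of the spectral integral.
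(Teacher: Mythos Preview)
The paper does not supply its own proof of this proposition: it is quoted in the Preliminaries as a known result with a citation to Hurd and Miamee \cite{hurd}, so there is nothing in the paper to compare your argument against line by line. Your proof is correct and is precisely the classical Gladyshev--Hurd construction that the citation points to: build the shift-by-$T$ as a unitary $U$ on $\mathcal{H}_Y$ using the PC identity, invoke the spectral theorem to get $Q$, take the $T$-th root $V$ via the Borel functional calculus, and set $P_j=V^{-j}Y_j$.
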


\begin{definition}\label{dsidef}
A random process $\{X(k), k\in \widetilde{T}\}$ is called discrete time DSI in  wide sense with index $H > 0$ and scale $\lambda > 0$ with parameter space $\widetilde{T}$, where $\widetilde{T}$ is any subset of distinct points of positive real numbers, if for all $t, u \in \widetilde{T}$ and $\lambda > 0$, where $\lambda  t, \lambda u \in \widetilde{T}$:\\
$(i) \; E[X^2(t)] < \infty,$\\
$(ii) \; E[X(\lambda t)] = \lambda^{H}E[X(t)],$\\
$(iii) \; E[X(\lambda t)X(\lambda u)] = \lambda^{2H}E[X(t)X(u)].$\\
See \cite{spec}.
\end{definition}

\begin{definition}\label{lamperti3}
Let $\{Y (t), t \in \mathbbm{R}\}$ be a random process. The quasi Lamperti transform with positive parameter H and $\alpha \in \mathbbm{R}^+$, denoted by $L_{H,\alpha}$, is defined as $L_{H,\alpha}Y (t) := t^H Y (log_{\alpha} t)$,
and the inverse quasi Lamperti transform for any random process $\{X(t), t  \in \mathbbm{R}^+\}$ is
$L_{H, \alpha}^{-1}X(t) :=\alpha^{-tH}X(\alpha^ t)$, \cite{dt}.
\end{definition}

\begin{proposition}\label{propdsi}
If $\{X(t), t \in \mathbbm{R}^+\}$ is a DSI process with parameter $H$ and scale $\lambda$, then for $t\in \mathbbm{R}$, $Y (t) :=L_{H, \alpha}^{-1}X(t)$ is PC with period $T = \log_{\alpha}\lambda$. Conversely if $\{Y (t), t \in \mathbbm{R}\}$ is PC with period T,  then for $t\in \mathbbm{R}^+$, $X(t) := L_{H,\alpha}Y (t)$ is a DSI  process with parameter $H$ and scale $\lambda$, \cite{dt}.
\end{proposition}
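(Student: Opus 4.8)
The plan is to prove Proposition \ref{propdsi} directly from Definitions \ref{dsidef} and \ref{lamperti3}, checking that the quasi Lamperti transform and its inverse exchange the defining moment conditions of DSI processes with those of PC (periodically correlated) processes. First I would recall that a second-order process $\{Y(t), t\in\mathbbm{R}\}$ is PC with period $T$ precisely when $E[Y(t+T)] = E[Y(t)]$ and $E[Y(t+T)Y(u+T)] = E[Y(t)Y(u)]$ for all $t,u$, together with finite second moments; and that $\{X(t), t\in\mathbbm{R}^+\}$ is DSI with parameter $H$ and scale $\lambda$ exactly when it satisfies $(i)$--$(iii)$ of Definition \ref{dsidef}. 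So the proposition amounts to two symmetric verifications, one for each direction.

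For the forward direction, assume $X$ is DSI and set $Y(t) = L_{H,\alpha}^{-1}X(t) = \alpha^{-tH}X(\alpha^{t})$. The key computation is to substitute $t \mapsto t + T$ with $T = \log_\alpha \lambda$, so that $\alpha^{t+T} = \alpha^{t}\alpha^{T} = \lambda\,\alpha^{t}$. Then
\[
E[Y(t+T)] = \alpha^{-(t+T)H}E[X(\lambda\,\alpha^{t})] = \alpha^{-(t+T)H}\lambda^{H}E[X(\alpha^{t})] = \alpha^{-tH}E[X(\alpha^{t})] = E[Y(t)],
\]
using condition $(ii)$ and $\alpha^{-TH} = \lambda^{-H}$. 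An analogous two-variable computation using condition $(iii)$ gives $E[Y(t+T)Y(u+T)] = E[Y(t)Y(u)]$, and condition $(i)$ transfers finiteness of second moments. Hence $Y$ is PC with period $T$.

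For the converse, assume $Y$ is PC with period $T$ and set $X(t) = L_{H,\alpha}Y(t) = t^{H}Y(\log_\alpha t)$. Now I would substitute $t \mapsto \lambda t$, noting $\log_\alpha(\lambda t) = \log_\alpha \lambda + \log_\alpha t = T + \log_\alpha t$, so that the periodicity of the mean and covariance of $Y$ produces the factors $\lambda^{H}$ and $\lambda^{2H}$ in $(ii)$ and $(iii)$ respectively, while $(i)$ is again immediate. The two directions are genuine inverses because $L_{H,\alpha}$ and $L_{H,\alpha}^{-1}$ compose to the identity, which one checks by direct substitution. I do not anticipate a serious obstacle here: the argument is a change-of-variables bookkeeping exercise, and the only point requiring a little care is matching the period $T = \log_\alpha\lambda$ with the scale $\lambda$ consistently (i.e. tracking that $\alpha^{T} = \lambda$ and the induced power-of-$\lambda$ normalizations cancel correctly in each moment identity). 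Since this is a known result attributed to \cite{dt}, the proof is expected to be short and purely computational.
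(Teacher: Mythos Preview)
Your argument is correct: the direct verification that the quasi Lamperti transform carries the moment identities $(i)$--$(iii)$ of Definition~\ref{dsidef} to the periodicity conditions defining a PC process (and vice versa) goes through exactly as you outline, with the key identity $\alpha^{T}=\lambda$ producing the required cancellations. Note, however, that the paper does not supply its own proof of Proposition~\ref{propdsi}; it simply states the result and cites \cite{dt}, so there is no in-paper argument to compare against. Your computation is the standard one and is presumably what appears in the cited reference.
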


\section{Main results}
Here, we study a new class of continuous time DSI processes $\{X^d(t), t \geq 1\}$, and investigate its special structure in subsection 3.1. Further, the covariance  structure and the spectral representation of the process are studied in subsections 3.2 and 3.3, respectively.

\subsection{Continuous DSI process}
We consider a second order discrete time DSI process $\{X^d_{(j-1)q+i}, j \in \mathbbm{N}, i=1,\cdots, q\}$ with scale $\lambda$, in the sense that  $X_{(j-1)q+i}^d \equiv \lambda^{(j-1)H}X^d_i$, and a random measure $M_{i}$ on Borel field of subsets of $B_{j,i}$, where $M_{i}(B_{j,i}):=X^d_{(j-1)q+i}$. For $A, B \subset B_{j,i}$, $C \subset B_{k,l}$, we define
\begin{equation}\label{e}
E[M_{i}(A)]=\frac{|A|}{|B_{j,i}|}E[X^d_{(j-1)q+i}],
\end{equation}
and covariance functions
\begin{equation}\label{3.2}
\sigma_{j_i,j_i}(A,B)=\frac{\beta |A| |B| + (1-\beta)m_{j,i} |A \cap B|}{m_{j,i}^2}\sigma_{j_i j_i}^d, \;\; \sigma_{j_i,k_l}(A,C):=\frac{|A||C|}{m_{j,i} m_{k,l}}\sigma_{j_i k_l}^d,
\end{equation}
where
$$\sigma_{j_i,j_i}(A,B) = <M_{i}(A), M_{{i}}(B)>, \sigma_{j_i,k_l}(A,C)=<M_{i}(A), M_{l}(C)>, \sigma_{j_i j_i}^d= Var(X^d_{(j-1)q+i}),$$
$$\sigma_{j_i k_l}^d=<X^d_{(j-1)q+i}, X^d_{(k-1)q+l}>\; =\lambda^{(j+k-2)H}< X^d_i, X^d_l>, \quad
 0\leq\beta\leq 1,$$
and $m_{j,i}= |B_{j,i}|= \lambda^{j-1}(s_i-s_{i-1})$. Also, for $j \neq k$,
$<M_{i}(A), M_{l}(B_{k,l})> = \frac{|A|}{m_{j,i}}\sigma_{j_i k_l}^d.$

Let $X^d_{(j-1)q+i}(t):=M_{i}(\lambda^{j-1}s_{i-1}, t]$, where $t=\lambda^{j-1}s_{i^*}\in B_{j,i}$, $s_{i-1} < s_{i^*} \leq s_i$.
 Thus, by Eq (\ref{e}),
\begin{equation}\label{e2}
E[X^d_{(j-1)q+i}(t)]= {\bf a}_{s_{i^*},s_i} E[X^d_{(j-1)q+i}],
\end{equation}
where  ${\bf a}_{s_{i^*},s_i}= \frac{s_{i^*}-s_{i-1}}{s_{i}-s_{i-1}}$. Moreover, for $t, u \in B_{j,i}, t \leq u$, and $v \in B_{k,l}$, by (\ref{3.2}),
\begin{equation}\label{cov}
<X^d_{(j-1)q+i}(t), X^d_{(j-1)q+i}(u)> = [\beta {\bf a}_{s_{i^*},s_i} {\bf a}_{s_{\hat{i}},s_i} + (1-\beta){\bf a}_{s_{i^*},s_{i}}]\sigma_{j_i j_i}^d,
\end{equation}
$$<X^d_{(j-1)q+i}(t), X^d_{(k-1)q+l}(v)> = {\bf a}_{s_{i^*},s_{i}} {\bf a}_{s_{l^*},s_{l}} \sigma^d_{j_i k_l}\qquad\qquad\qquad\;\;\;\;$$
\begin{equation}\label{cov2}
\qquad\qquad\qquad\qquad\qquad\quad= {\bf a}_{s_{i^*},s_{i}} {\bf a}_{s_{l^*},s_{l}} \lambda^{(j+k-2)H} < X^d_i, X^d_l>,\;
\end{equation}
where $t=\lambda^{j-1}s_{i^*}, u=\lambda^{j-1}s_{\hat{i}}, v=\lambda^{k-1}s_{l^*}$, for $ s_{i-1}< s_{i^*}\leq s_{\hat{i}} \leq s_i$, and $s_{l-1}<s_{l^*}\leq s_l$.
Consequently, if $s, w \in  B_{j_i+nq}$, $z \in B_{k_l+nq}$, $n\in \mathbbm{N}$, such that $s=\lambda^n t, w=\lambda^n u, z=\lambda^n v$, then
\begin{equation}\label{3}
<X^d_{(j+n-1)q+i}(s), X^d_{(k+n-1)q+l}(z)> = \lambda^{2nH}<X^d_{(j-1)q+i}(t), X^d_{(k-1)q+l}( v)>.
\end{equation}
 Further, by (\ref{e}), (\ref{e2}) and the DSI property of $X^d_{(j-1)q+i}$, we have that
\begin{equation}\label{e3}
E[X^d_{(j+n-1)q+i}(s)] = \lambda^{n H} E[X^d_{(j-1)q+i}(t)].
\end{equation}
 Therefore, by defining
\begin{equation}\label{dj}
N_{j,i}(t-\lambda^{j-1}s_{i-1}):= M_{i}(\lambda^{j-1}s_{i-1}, t] = X^d_{(j-1)q+i}(t),
\end{equation}
where $ t \in B_{j,i}$, we find that $N_{j,i}(y), 0< y \leq \lambda^{j-1}(s_i-s_{i-1})$ is a discrete time DSI process with scale $\lambda$ with respect to $j$, $y$, for fixed $i$.

Moreover, for $t\geq 1$ we define
\begin{equation}\label{dsi}
X^d(t) = \sum_{j=1}^\infty \sum_{i=1}^q \Delta_{j,i}(t) I_{B_{j,i}}(t); \qquad \Delta_{j,i}(t)= \frac{m_{j,i}-m_{j,i}^t}{m_{j,i}}X^d_{(j-1)q+i-1}+X^d_{(j-1)q+i}(t),
\end{equation}
where $m_{j,i}^t := t-\lambda^{j-1}s_{i-1}$. Thus, for $t=\lambda^{j-1}s_{i^*} \in B_{j,i}$,
\begin{equation}\label{dsi1}
X^d(t) = \overline{{\bf a}}_{s_{i^*},s_i}X^d_{(j-1)q+i-1}+X^d_{(j-1)q+i}(t),\;\;
\end{equation}
where $\overline{{\bf a}}_{s_{i^*},s_i}= 1-{\bf a}_{s_{i^*},s_i}$. In the next subsection, we study about the covariance structure of the certain continuous time DSI process $\{X^d(t), t \geq 1\}$.

\subsection{Covariance function of the certain DSI precess}
Now, we present a lemma which provides the covariance function of the certain continuous time DSI process $\{X^d(t), t\geq 1\}$.

\begin{lemma}\label{lem}
Let $B_{j,i}, j \in \mathbbm{N}, i=1, \cdots, q$, be a partition of the positive real line, and $\sigma_{j_i k_l}^d=<X^d_{(j-1)q+i},X^d_{(k-1)q+l}>$. The covariance function of $X^d(t)$ for $t \in B_{j,i}$ and $u\in B_{k,l}$, $t\leq u$ is
\begin{equation}\label{covm}
\sigma^d(t,u)\equiv <X^d(t), X^d(u)>\qquad \qquad \qquad \qquad \qquad \qquad \qquad \qquad\qquad \qquad \qquad \qquad \qquad \qquad
\end{equation}

$$ = \left\{\begin{array}{cc} \hspace{-.2in} & \overline{{\bf a}}_{s_{i^*},s_i} (\overline{{\bf a}}_{s_{\hat{i}},s_i} \sigma^d_{j_{i-1} j_{i-1}} + {\bf a}_{s_{\hat{i}},s_i} \sigma^d_{j_{i-1} j_i}) + {\bf a}_{s_{i^*},s_i} \overline{{\bf a}}_{s_{\hat{i}},s_i} \sigma^d_{j_i j_{i-1}} + B({s_{i^*}},{s_{\hat{i}}})\sigma^d_{j_i j_i},\;\;  t, u \in B_{j,i}, \\ \hspace{-.2in} &\\ \hspace{-.2in} & \overline{{\bf a}}_{s_{i^*},s_i} (\overline{{\bf a}}_{s_{l^*},s_l} \sigma^d_{j_{i-1} k_{l-1}} + {\bf a}_{s_{l^*},s_l} \sigma^d_{j_{i-1} k_l}) + {\bf a}_{s_{i^*},s_i} (\overline{{\bf a}}_{s_{l^*},s_l} \sigma^d_{j_i k_{l-1}} +  {\bf a}_{{l^*},s_l} \sigma^d_{j_i k_l}),\; \end{array} \right. $$

$$\qquad\qquad\qquad\qquad\qquad\qquad\qquad\qquad\qquad\qquad\qquad\qquad\qquad\qquad t \in B_{j,i} , u \in B_{k,l},$$
where for $t , u \in B_{j,i}$, $t=\lambda^{j-1}s_{i^*}$ and $u=\lambda^{j-1}s_{\hat{i}}, \; s_{i-1} < s_{i^*} \leq s_{\hat{i}} \leq s_i$, and for $u \in B_{k,l}\;$, $u=\lambda^{k-1}s_{l^*}$, $s_{l-1} < s_{l^*} \leq s_l$; also $B({s_{i^*}},{s_{\hat{i}}})= \beta  {\bf a}_{s_{i^*},s_i} {\bf a}_{s_{\hat{i}}, s_i} + (1-\beta){\bf a}_{s_{i^*},s_i}$.
\end{lemma}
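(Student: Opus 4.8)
The plan is to use the fact that for each fixed $t$ exactly one indicator $I_{B_{j,i}}(t)$ in the series defining $X^d(t)$ is nonzero, so by (\ref{dsi1}) one has $X^d(t) = \overline{{\bf a}}_{s_{i^*},s_i} X^d_{(j-1)q+i-1} + X^d_{(j-1)q+i}(t)$ for $t = \lambda^{j-1}s_{i^*}\in B_{j,i}$, and likewise $X^d(u) = \overline{{\bf a}}_{s_{l^*},s_l} X^d_{(k-1)q+l-1} + X^d_{(k-1)q+l}(u)$ for $u = \lambda^{k-1}s_{l^*}\in B_{k,l}$ (with $(k,l)=(j,i)$ and $l^{*}=\hat\imath$ in the first branch). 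Expanding $\sigma^d(t,u)=\langle X^d(t),X^d(u)\rangle$ by bilinearity of the inner product then produces, in each case, a sum of four inner products, and the whole argument reduces to evaluating those four and regrouping.

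The only evaluation that is not read off directly from (\ref{cov})--(\ref{cov2}) is the cross term between a ``full'' random measure on a preceding subinterval, $X^d_{(j-1)q+i-1}=M_{i-1}(B_{j,i-1})$, and a ``partial'' one, $X^d_{(k-1)q+l}(u)=M_{l}(\lambda^{k-1}s_{l-1},u]$. For this I would apply (\ref{3.2}) (equivalently the relation $\langle M_i(A),M_l(B_{k,l})\rangle=\frac{|A|}{m_{j,i}}\sigma^d_{j_i k_l}$, using $|B_{j,i-1}|=m_{j,i-1}$), which gives $\langle X^d_{(j-1)q+i-1},X^d_{(k-1)q+l}(u)\rangle=\frac{u-\lambda^{k-1}s_{l-1}}{m_{k,l}}\,\sigma^d_{j_{i-1}k_l}$, and then note $\frac{u-\lambda^{k-1}s_{l-1}}{m_{k,l}}=\frac{\lambda^{k-1}(s_{l^*}-s_{l-1})}{\lambda^{k-1}(s_l-s_{l-1})}={\bf a}_{s_{l^*},s_l}$, so this cross term equals ${\bf a}_{s_{l^*},s_l}\sigma^d_{j_{i-1}k_l}$; symmetrically $\langle X^d_{(j-1)q+i}(t),X^d_{(k-1)q+l-1}\rangle={\bf a}_{s_{i^*},s_i}\sigma^d_{j_i k_{l-1}}$. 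The partial--partial inner products are then supplied by (\ref{cov}) in the same-subinterval case and by (\ref{cov2}) in the distinct-subinterval case.

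For $t,u\in B_{j,i}$ with $t\le u$ (hence $s_{i^*}\le s_{\hat\imath}$), the four terms are $\overline{{\bf a}}_{s_{i^*},s_i}\overline{{\bf a}}_{s_{\hat\imath},s_i}\sigma^d_{j_{i-1}j_{i-1}}$ from the two preceding-subinterval factors, the two cross terms $\overline{{\bf a}}_{s_{i^*},s_i}{\bf a}_{s_{\hat\imath},s_i}\sigma^d_{j_{i-1}j_i}$ and ${\bf a}_{s_{i^*},s_i}\overline{{\bf a}}_{s_{\hat\imath},s_i}\sigma^d_{j_i j_{i-1}}$ handled as above, and finally $B(s_{i^*},s_{\hat\imath})\sigma^d_{j_i j_i}$ obtained from (\ref{cov}); here the hypothesis $t\le u$ is exactly what licenses the use of (\ref{cov}), and $B(s_{i^*},s_{\hat\imath})=\beta{\bf a}_{s_{i^*},s_i}{\bf a}_{s_{\hat\imath},s_i}+(1-\beta){\bf a}_{s_{i^*},s_i}$ by definition. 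Pulling $\overline{{\bf a}}_{s_{i^*},s_i}$ out of the first two summands gives the first branch of (\ref{covm}). For $t\in B_{j,i}$ and $u\in B_{k,l}$ in distinct subintervals, the four terms are $\overline{{\bf a}}_{s_{i^*},s_i}\overline{{\bf a}}_{s_{l^*},s_l}\sigma^d_{j_{i-1}k_{l-1}}$, $\overline{{\bf a}}_{s_{i^*},s_i}{\bf a}_{s_{l^*},s_l}\sigma^d_{j_{i-1}k_l}$, ${\bf a}_{s_{i^*},s_i}\overline{{\bf a}}_{s_{l^*},s_l}\sigma^d_{j_i k_{l-1}}$ and ${\bf a}_{s_{i^*},s_i}{\bf a}_{s_{l^*},s_l}\sigma^d_{j_i k_l}$ (the last by (\ref{cov2})), and grouping the first pair under $\overline{{\bf a}}_{s_{i^*},s_i}$ and the second under ${\bf a}_{s_{i^*},s_i}$ yields the second branch, where one may substitute $\sigma^d_{j_i k_l}=\lambda^{(j+k-2)H}\langle X^d_i,X^d_l\rangle$. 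I expect the only genuine hazard to be bookkeeping in the edge case $i=1$ (resp.\ $l=1$), where $X^d_{(j-1)q+i-1}=X^d_{(j-1)q}$ must be interpreted as the terminal point $X^d_{(j-2)q+q}$ of the preceding scale block $B_{j-1,q}$, and one has to check that the length-ratio identifications $|A|/m={\bf a}$ still hold verbatim under that indexing convention; this does not affect any of the displayed formulas.
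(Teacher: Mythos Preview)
Your proposal is correct and follows essentially the same route as the paper: write $X^d(t)$ and $X^d(u)$ via (\ref{dsi1}), expand the inner product by bilinearity into four terms, and evaluate them using (\ref{cov}), (\ref{cov2}) together with the cross-term identity $\langle X^d_{(j-1)q+i-1}, X^d_{(k-1)q+l}(u)\rangle = {\bf a}_{s_{l^*},s_l}\,\sigma^d_{j_{i-1} k_l}$. Your derivation of that cross term from (\ref{3.2}) and your remark on the $i=1$ edge case are in fact more explicit than what the paper records.
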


\begin{proof}
By (\ref{dsi1}),
$$\sigma^d(t,u)\qquad\qquad\qquad\qquad\qquad\qquad\qquad\qquad\qquad\qquad\qquad\qquad\qquad\qquad\qquad\;$$
$$=\left\{ \begin{array}{cc} &\hspace{-.2in}  <\overline{{\bf a}}_{s_{i^*},s_i}X^d_{(j-1)q+i-1}+X^d_{(j-1)q+i}(t),  \overline{{\bf a}}_{s_{\hat{i}},s_i}X^d_{(j-1)q+i-1}+X^d_{(j-1)q+i}(u)> \qquad\; t, u \in B_{j,i}\\& \hspace{-.2in}  \\ & \hspace{-.2in}
 <\overline{{\bf a}}_{s_{i^*},s_i}X^d_{(j-1)q+i-1}+X^d_{(j-1)q+i}(t),  \overline{{\bf a}}_{s_{l^*},s_l}X^d_{(k-1)q+l-1}+X^d_{(k-1)q+l}(u)>,  t \in B_{j,i} , u \in B_{k,l}\end{array} \right.$$
 Then, by (\ref{cov}), (\ref{cov2}) and the fact that $<X^d_{(j-1)q+i-1}, X^d_{(k-1)q+l}(u)> = {\bf a}_{s_{l^*}} \sigma^d_{j_{i-1}  k_l}$, we have the result.
\end{proof}
Consequently, for $t \in B_{j,i}, u \in B_{k,l}$,  by (\ref{3}) and (\ref{covm}) we have that
$$\sigma^d(\lambda^n t, \lambda^n u) = \lambda^{2nH} \sigma^d(t, u),$$
where $\lambda^n t \in B_{j_i+nq}, \lambda^n u \in B_{k_l+nq}$.  Also  by (\ref{e3}), (\ref{dsi}),
$E[X^d(\lambda^n t)] = \lambda^{nH} E[X^d(t)],$
which shows that  $\{X^d(t), t \geq 1\}$ is a continuous time DSI process, in wide sense, with scale $\lambda$ in $t$.

\subsection{Spectral representation}
In this section we study the spectral representation structure of the continuous time DSI process $\{X^d(t), t \geq 1\}$.

Let $\{X^d_{(j-1)q+i}, j \in \mathbbm{N}, i=1, \cdots, q\}$ be a subsidiary DSI process with scale $\lambda$ in $j$ for fixed $i$, in the sense that $X^d_{(j-1)q+i} \equiv \lambda^{(j-1)H}X^d_i$.
Modarresi and Rezakhah \cite{new1} provided the spectral representation and spectral density matrix  of the corresponding
subsidiary multi-dimensional self-similar corresponding to such subsidiary DSI process which can be applied to obtain the spectral representation and spectral density function of such continuous time DSI process.
Also Modarresi and Rezakhah \cite{spec} provided the spectral representation of the corresponding discrete time DSI process when $\lambda=\alpha^q$ and $s_j=\alpha^J$ which can be applied to obtain spectral representation of the process in such a case.

\vspace{0.1in}\label{mse3}
\input{epsf}
\epsfxsize=1.5in \epsfysize=1.5in
\begin{figure}\label{mse2}
\vspace{-0.21 in}
\centerline{\includegraphics[scale=0.47]{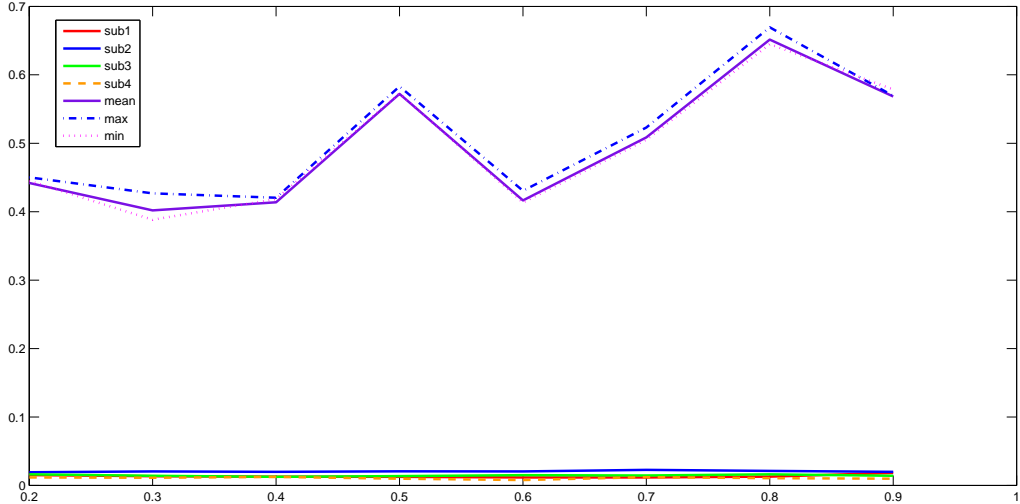}}
\vspace{-0.1in}
\caption{\scriptsize MSEs of Hurst estimator in 100 repetitions of DSI processes with 4 scale intervals, 4 subintervals and 80 equally spaced samples points in each scale interval. MSEs of $\hat{H} = (\hat{H_1}, \hat{H_2}, \hat{H_3}, \hat{H_4})$ are indicated with different colors for subintervals 1, 2, 3, 4. Also, MSEs of the estimator proposed by Modarresi and Rezakhah \cite{characterization1} are indicated for mean, maximum and minimum of $\hat{H}$ for different values of $H$.}
\end{figure}
\vspace{-0.1in}
\section{Simulation}
This section, we present a new method for the estimation of Hurst parameter of  DSI processes which is more compatible for real world data. 
This motivation comes from the fact that in many DSI processes we have different behavior for the fluctuation of the process in different parts of scale intervals which are followed in corresponding parts of successive scale intervals. So assuming the Hurst parameter $H$ to be some fixed number 
 for the whole process, in many real data is not realistic. The other behavior of the DSI process is some self-similar behavior inside the scale intervals. Therefore, it would be useful to consider some  partition  of scale intervals, say $q$, that the ratio of the length of each subinterval to the
  length of the corresponding scale intervals
   are fixed through out the process. Then we evaluate the estimation of the Hurst parameter corresponding to the growth of the process in corresponding 
   subintervals.  By assuming $q$ such partition for each scale interval, we have a  Hurst estimation vector of size $q$ to be estimated.
\vspace{0.25in}
\input{epsf}
\epsfxsize=2.5in \epsfysize=1.5in
\begin{figure}\label{ss2000}
\vspace{-0.1 in}
\centerline{\includegraphics[scale=0.42]{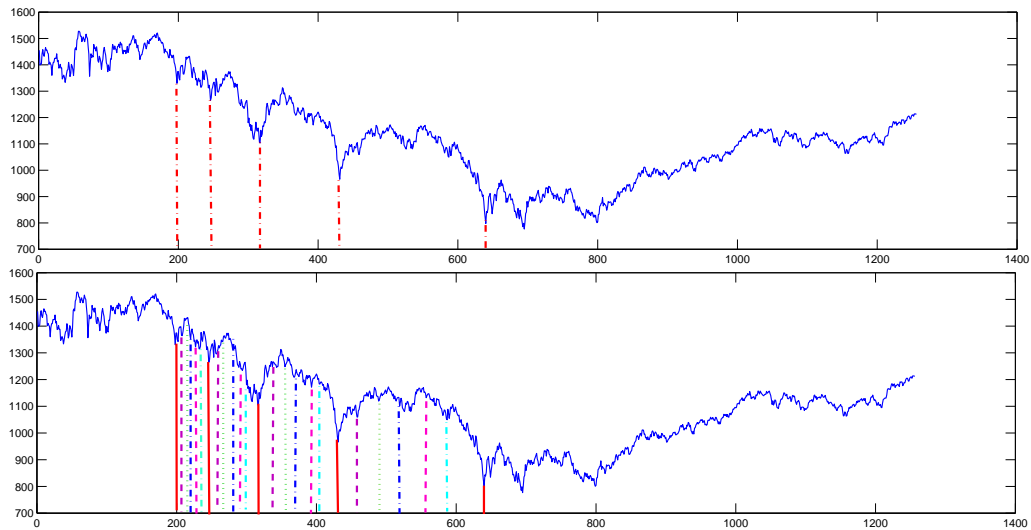}}
\vspace{0.1in}
\caption{\scriptsize S$\&$P500 indices from 1/1/2000 until 31/12/2004. The existence of a DSI
behavior is justified from 16/10/2000 until 23/7/2002 in four scale intervals which are indicated with red lines (Top). Also, subintervals are indicated with colored dashed lines (Bottom). The scale of the process for the periods is evaluated approximately with 1.66.}
\end{figure}
  Extending the method  proposed by Modarresi and Rezakhah \cite{characterization1} for scale intervals, we evaluate the ratios of quadratic variations of corresponding subintervals in consecutive scale intervals by the following and estimate the Hurst vector.
Thus we calculate the quadratic variations corresponding to the $i-$ th subinterval in the $j-$ th scale interval as
\begin{equation}\label{ss}
SS_{j,i}=\frac{1}{l_i-1}\sum_{k=2}^{l_i} (x(t_{(j-1)q+L_{i-1}+k})-x(t_{(j-1)q+L_{i-1}+k-1}))^2,
\end{equation}
\noindent for $j=1, \cdots, m$, $i=1, \cdots, q$; where $l_i$ is the number of observations in $i-$ th subinterval, and $L_i=\sum_{k=1}^i l_k$ , $L_0=0$. Then, we use
\begin{equation}\label{mu}
\mu_{j,i}=\frac{\log(SS_{j+1,i}/SS_{j,i})}{2\log \lambda}.
\end{equation}
Finally, the Hurst estimation for $i-$ th subinterval is obtained as
$\widehat{H}_i=\frac{1}{m-1}\sum_{j=1}^{m-1} \mu_{j,i}$.
\vspace{0.15in}
\input{epsf}
\epsfxsize=2.5in \epsfysize=1.5in
\begin{figure}\label{sp222}
\vspace{-0.1 in}
\centerline{\includegraphics[scale=0.42]{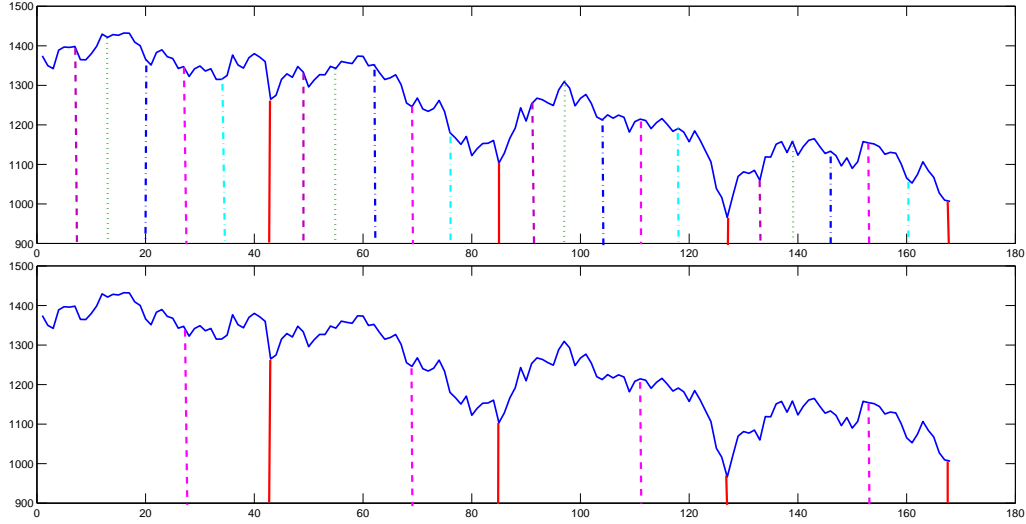}}
\vspace{0.1in}
\caption{\scriptsize The sampled S$\&$P indices from 16/10/2000 until 23/7/2002 obtained by the sampling method. Scale intervals are indicated with red lines. Also, 6 subintervals (Top), and 2 subintervals (Bottom) are indicated with colored dashed lines.}
\end{figure}
The superiority of the method is investigated for simulated and empirical data. In the simulation case, MSEs of the proposed method and the method presented in \cite{characterization1} are computed for different Hurst indices.  To this end, we consider DSI processes in which the Hurst parameter changes slowly in time, such that we can assume some subintervals in each scale interval where the Hurst parameter is approximately constant for the entire subinterval, but it changes from one subinterval to another. As an example, we consider a DSI process which contains 4 scale intervals, 4 subintervals and 80 equally spaced samples points in each scale interval. Hence, the $H$ estimator would be a vector of size 4, as $\hat{H} = (\hat{H_1}, \hat{H_2}, \hat{H_3}, \hat{H_4})$. The MSEs of $\hat{H}$ are computed and depicted in Figure 1,  with different colors for subintervals 1, 2, 3, 4, and for different values of $H$. Also, the MSEs of the estimator proposed by Modarresi and Rezakhah \cite{characterization1} are indicated for the mean, maximum and minimum of $\hat{H}$, for different values of $H$. As can be seen in the Figure 1, considering subintervals in each scale interval gives a more accurate estimation than the method presented in \cite{characterization1}.
\vspace{0.5in}
\input{epsf}
\epsfxsize=2.5in \epsfysize=1.5in
\begin{figure}\label{dow}
\vspace{-0.1 in}
\centerline{\includegraphics[scale=0.42]{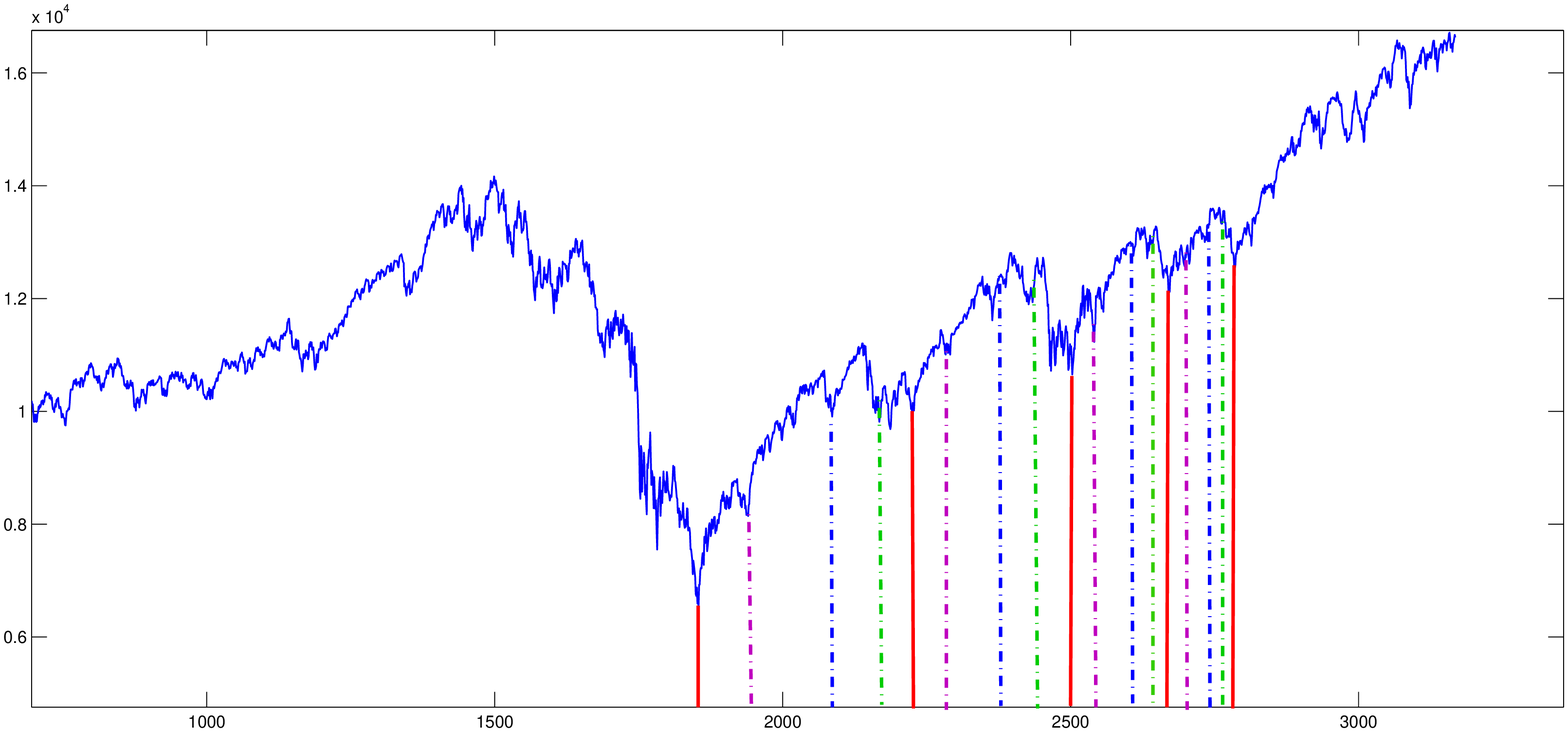}}
\vspace{-0.1in}
\caption{\scriptsize Dow Jones indices from 25/10/2001 until 28/5/2014. The existence of a DSI behavior is justified from 6/3/2009 until 14/11/2012 in four scale intervals which are indicated with red lines. Also, subintervals are indicated with colored dashed lines. The scale of the process for the periods is evaluated approximately with 1.493.}
\end{figure}

\vspace{-0.25 in}

\subsection{Empirical Data}
The superiority of the method is also investigated for empirical data. To this end, we study the daily indices of two  stock markets: S$\&$P500 and Dow Jones. First, we consider daily indices of S$\&$P500 from the first January 2000 till  the end of 2004. As there is not any index on Saturdays, Sundays and holidays, the available data for the selected period are 1256 days. The time series of these indices is shown in Figure 2. These data are also studied by Bartolozzi et al. \cite{bar}, where the existence of a DSI behavior, with scale 2, in some periods of data has been justified. We only consider the indices from 16th October 2000 until 23th July 2002 which the DSI behavior can be seen in four scale intervals and is shown in Figure 2 by red lines. The preferred scaling factor of the process for the periods is evaluated approximately with 1.66. Therefore, the end points of the scale intervals would be $b_1 = 200$, $b_2 = 246$, $b_3 = 317$, $b_4 = 431$ and $b_5 = 640$. To study these four scale intervals accurately, subintervals should be determined such that the process behaves the same in all consecutive subintervals. For the  S$\&$P500  process, the subintervals of the first scale interval are indicated with the end points: $i_{11}=200$, $b_{12}=206$, $b_{13}=212$, $b_{14}=219$,  $b_{15}=226$, and $b_{16}=233$, which are shown in Figure 2 with different colors. The corresponding subintervals of the $j-$th scale interval, j = 2, 3, 4, are determined as $b_j + [(1.66)^{j-1} * i]$, where $i = 6,12,19,26,33,41$, and  $b_j$ is the starting point of the $j-$th scale interval, also, $[.]$ is the bracket. Following the  sampling method, we consider 42 arbitrary sample points in the first scale interval as 200, 201, 202, $\cdots$, 241, and corresponding sample points in the $j-$th scale interval, can also be determined as  $b_j + [(1.66)^{j-1} * i]$, j = 2, 3, 4. So, we would have 6, 6, 7, 7, 7, 9 data in subintervals 1, 2,  $\cdots$, 6 of the first scale interval, respectively. The sampled S$\&$P500 process, obtained by the sampling method, is depicted in Figure 3.

By computing ratios of sample variances of subintervals, Eq (\ref{ss}), (\ref{mu}), the Hurst estimation comes out as $\widehat{H} = [0.24, 0.23, 0.13, 0.24, 0.07, 0.05]$. Since, the first four Hurst estimations, and the last two ones are approximately the same; so, we combine the first four subintervals as one subinterval, and the last two ones as the second subinterval. Hence, there would be 26 and 16 observations in the first and  second subintervals, respectively. Such partitions are shown in Figure 3 (Bottom). Thus, the Hurst estimations would be: $\widehat{H}=[0.23, 0.06]$. The Hurst parameter also estimated based on the method presented in  \cite{characterization1} as  $\widehat{H}=0.16$.

As an another example, we consider daily indices of Dow Jones from 25th October 2001 till 28th May 2014. As there is not any index on Saturdays, Sundays and holidays, the available data for the selected period are 3168 days. These indices are plotted in Figure 4, where the existence of a DSI behavior in a period from 6th March 2009 until 14th November 2012 has been justified in four scale intervals.
 Also, the scale $\lambda$ has been evaluated approximately with 1.493. The red lines in Figure 4 reveals the scale
 intervals with the end points $b_1 = 1853$, $b_2 = 2225$, $b_3 = 2503$, $b_4 = 2671$ and $b_5 = 2748$. To estimate the
 Hurst parameter accurately, 4 subintervals are determined in each scale interval. Since the last scale interval, is the shortest one,
 so we  determine subintervals in it with the end points  $b_{11}=2671$, $b_{12}=2698$, $b_{13}=2741$, $b_{14}=2766$, and $b_{15}=2784$.
 The subintervals of the rest scale intervals could be found from the relation $b_{5-j}+[1.493^{j-1}*i]$, where  $i = 27, 70, 95$,
 and  $b_j$ is the starting point of the $j-$th scale interval.
 Partitioning of scale intervals are shown in Figure 4 with different colors. Also, 113
 arbitrary sample points are considered in the last scale interval as 2671, 2672, 2673, $\cdots$, 2783.
 The corresponding sample points in the $j-$th scale interval, j = 1, 2, 3, can be determined as  $a_{5-j}+[1.493^{j-1}*i]$,
  where for the indices in the mentioned period, we would have 27, 43, 25, 18 data in subintervals 1, 2, 3, 4, respectively.
  Using (\ref{ss}) and (\ref{mu}), the Hurst estimation comes out as: $\widehat{H} = [0.46, 0.56, 0.63, 0.50]$,
  which shows the Hurst indices of subintervals 1-4. Also, based on the method provided by Modarresi and Rezakhah \cite{characterization1},
  the Hurt estimation for the whole process is computed as  $\widehat{H}=0.53$.
\section{Conclusion}
A certain continuous time DSI process is defined, where its covariance function is generated by the covariance function of a discrete time DSI process through defining some simple random measure on the positive real line. The  time varying spectral representation of such a process is characterized, and its spectrum is estimated. Moreover, a new method for Hurst estimation of DSI processes is proposed. The estimation  method is based on determining some subintervals in each scale interval and evaluating ratios of quadratic variations corresponding to consecutive subintervals. Numerical results were presented to show that the proposed method provided an accurate estimation than the method provided in \cite{characterization1}, in the sense of MSE. Finally, the estimation method is applied to the S$\&$P500 and Dow Jones indices for some special periods.

\bibliographystyle{unsrt}

\end{document}